\newtheorem{theo}{Theorem}
\newtheorem{lemma}{Lemma}
\newtheorem{hyp}{Assumption}
\newtheorem{example}{Example}
\DeclareMathOperator*{\argmin}{argmin}
\DeclareMathOperator*{\zer}{zer}
\DeclareMathOperator*{\fix}{fix}
\newcommand\oT{\mathsf T}
\newcommand\oU{\mathsf{U}}
\newcommand\oS{\mathsf{S}}
\newcommand\oR{\mathsf{R}}
\newcommand\oI{\mathsf{I}}
\newcommand\oJ{\mathsf{J}}
\newcommand\oY{\mathsf{Y}}
\newcommand\RR{\mathds{R}} 
\newcommand\NN{\mathds{N}} 
\newcommand\PP{\mathbb{P}} 
\newcommand\EE{\mathbb{E}} 
\newcommand{\leftnorm}{\left|\!\left|\!\left|}
\newcommand{\rightnorm}{\right|\!\right|\!\right|}
\title{Asynchronous Distributed Optimization \\ 
using a Randomized Alternating Direction Method of Multipliers}
\author{Franck Iutzeler, Pascal Bianchi, Philippe Ciblat and Walid Hachem% <-this % stops a space
\thanks{This work was partially funded by the French Defense Agency (DGA).}% <-this % stops a space
\thanks{The authors are with Telecom ParisTech, CNRS LTCI, 75013 Paris, France  {\tt\small \{lastname\}@telecom-paristech.fr}} }
\begin{document}

\setlength{\pdfpageheight}{\paperheight}
\setlength{\pdfpagewidth}{\paperwidth}

\maketitle
\thispagestyle{empty}
\pagestyle{empty}

%%%%%%%%%%%%%%%%%%%%%%%%%%%%%%%%%%%%%%%%%%%%%%%%%%%%%%%%%%%%%%%%%%%%%%%%%%%%%%%%
\begin{abstract}
Consider a set of networked agents endowed with private cost functions         
and seeking to find a consensus on the minimizer of the aggregate cost.        
A new class of random asynchronous distributed optimization methods is         
introduced. The methods generalize the standard Alternating Direction Method
of Multipliers (ADMM) to an asynchronous setting where isolated components 
of the network are activated in an uncoordinated fashion. The algorithms       
rely on the introduction of \emph{randomized} Gauss-Seidel iterations of       
a Douglas-Rachford operator for finding zeros of a sum of two monotone         
operators. Convergence to the sought minimizers is provided under mild         
connectivity conditions. Numerical results sustain our claims.                 
\end{abstract}

%%%%%%%%%%%%%%%%%%%%%%%%%%%%%%%%%%%%%%%%%%%%%%%%%%%%%%%%%%%%%%%%%%%%%%%%%%%%%%%%
\section{Introduction}

Consider a network represented by a set $V$ of agents seeking to solve the following optimization problem
on a Euclidean space $\mathsf X$:
\begin{equation}
\label{eq:pb}
\inf_{x\in\mathsf X} \sum_{v\in V} f_v(x)\ ,
\end{equation}
where $f_v$ is a convex real function known by agent $v$ only.
Function $f_v$ can be interpreted as the price payed by an agent $v$ when the global network state is equal to $x$.

This problem arises for instance in \emph{cloud learning} applications
where massive data sets are distributed in a network and processed by
distinct virtual machines \cite{Forero2011}.  We investigate distributed optimization
algorithms: agents iteratively update a local estimate using their
private objective $f_v$ and, simultaneously, exchange information with
their neighbors in order to eventually reach a consensus on the global
solution.  Standard algorithms are generally \emph{synchronous}: all
agents are supposed to complete their local computations synchronously
at each tick of an external clock, and then synchronously merge
their local results.  However, in many situations, one faces variable
sizes of the local data sets along with heterogeneous computational
abilities of the virtual machines. Synchronism then becomes a burden,
as the global convergence rate is expected to depend on the local computation times of the slowest agents. 
It is crucial to introduce asynchronous methods
which allow the estimates to be updated in a non-coordinated fashion,
rather than all together or in some frozen order.  
%This paper follows this perspective.

The literature contains at least three classes of distributed optimization methods for solving~(\ref{eq:pb}).
The first one is based on the simultaneous use of a local \emph{first-order} optimization algorithm 
(subgradient algorithm~\cite{Bertsekas1989,Sundhar Ram2010,Bianchi2011},  
Nesterov-like method \cite{Jakovetic2012,Duchi2012})
and a gossip process which drives the network to a consensus. 
A second class of methods is formed by distributed Newton-Raphson methods \cite{jadbabaie2009distributed}.
This paper focuses on a third class of methods derived from proximal splitting methods~\cite{Lions1979,Eckstein1992,combettes-pesquet-book1-2011}.
Perhaps the most emblematic proximal splitting method is the so-called \emph{Alternating Direction Method of Multipliers} (ADMM)
recently popularized to multiagent systems by the monograph~\cite{Boyd2011}.
Schizas \emph{et al.} demonstrated the remarkable potential of ADMM to handle distributed optimization problems and introduce a useful 
framework to encompass graph-constrained communications \cite{Schizas2008}.
We also refer to \cite{Mota2012,Wei2012} for recent contributions.
However, all of these works share a common perspective: Algorithms are synchronous. They require a significant amount of coordination or scheduling between agents.
In \cite{Schizas2008,Mota2012}, agents operate  in parallel, whereas \cite{Wei2012} proposes a sequential version of ADMM
where agents operate one after the other in a predetermined order.

\noindent {\bf Contributions.}  This paper introduces a novel class of distributed algorithms to solve~(\ref{eq:pb}).
The algorithms are asynchronous in the sense that some components of the network are allowed to wake up at random and perform local updates, while the rest of the network stands still. No coordinator or global clock is needed. The frequency of activation of the various network components is likely to vary.
The algorithms rely on the introduction of \emph{randomized} Gauss-Seidel iterations of a Douglas-Rachford monotone operator.
We prove that the latter iterations provides a new powerful method for finding the zeros of a sum of two monotone operators.
Application of our method to problem~(\ref{eq:pb}) yields a randomized ADMM-like algorithm, which is proved to converge to the sought minimizers.
\smallskip

The paper is organized as follows. The distributed optimization problem is
rigorously stated in Section \ref{sec:admm}. The synchronous ADMM algorithm
that solves this problem is then described in Section~\ref{sec:sync-admm}. 
Section~\ref{sec:randprox} forms the core of the paper. After quickly 
recalling the monotone operator formalism, the random Gauss-Seidel form 
of the proximal algorithm is described and its convergence is shown there. 
These results will eventually lead to an asynchronous version of the 
well-known Douglas-Rachford splitting algorithm. 
In Section~\ref{sec:randadmm}, the results of Section~\ref{sec:randprox}
are applied towards developing an asynchronous version of the ADMM
algorithm. An implementation example is finally provided in 
Section~\ref{sec:num} along with some simulations in Section~\ref{sec:simus}. 

\subsection*{Notations}

Consider a non-directed graph $G=(V,E)$ where $V$ is a set of vertices % endowed with an arbitrary ordering.
and $E$ a set of edges. We sometimes note $v\sim w$ for $\{v,w\}\in E$.
For any $A\subset V$, we denote by $G(A)$ the subgraph of $G$ induced by $A$ 
(\emph{i.e.}, $G(A)$ has vertices $A$ and for any $(v,w)\in A^2$, $\{v,w\}$ is an edge of $G(A)$ if and only if 
it is an edge of $G$).
Let $\mathsf X$ be a Euclidean space. % We denote by $|\,.\,|$ the Euclidean norm on $\mathsf X$. 
We denote by $\mathsf X^A$ the set of functions on $A\to\mathsf X$.
It  is endowed with the inner product $\langle x,y\rangle_A = \sum_{v\in A}\langle x(v),y(v)\rangle_{\mathsf X}$
where $\langle \,.\,,\,.\,\rangle_{\mathsf X}$ is the inner product on~$\mathsf X$.
We will omit subscripts $_{\mathsf X}$ and $_A$ when no confusion occurs. 
For any finite collection $A_1,\cdots, A_L\subset V$, we endow the space
$\mathsf X^{A_1}\times\cdots\times\mathsf X^{A_L}$ with the scalar product
$\langle x,y\rangle = \sum_{\ell=1}^L\langle x_\ell,y_\ell\rangle_{A_\ell}$ for any 
$x=(x_1,\cdots,x_L)$ and $y=(y_1,\cdots,y_L)$.

We denote by $\Pi_Ax$ the restriction of $x$ to $A$ \emph{i.e.}, $\Pi_A:\mathsf X^V\to\mathsf X^A$ is the linear operator defined for any $x\in \mathsf X^V$ as
$\Pi_Ax : (v\in A)\mapsto x(v)$. 
We denote by $1_A\in \mathsf X^A$ the constant function equal to one and  by $\text{sp}(1_{A})$ the linear span of $1_A$ \emph{i.e.}, the set of
constant functions on $A$. Notation $|A|$ represents the cardinal of a set $A$.

For a closed proper convex function $h:\mathsf X\to (-\infty,+\infty]$ we define
$\text{prox}_{h,\rho}(x) = \arg\min_y h(y)+\frac\rho 2\|y-x\|^2$.

%%%%%%%%%%%%%%%%%%%%%%%%%%%%%%%%%%%%%%%%%%%%%%%%%%%%%%%%%%%%%%%%%%%%%%%%%%%%%%%%
\section{Distributed Optimization on a Graph}
\label{sec:admm}

Consider a network of agents represented by a non-oriented graph $G=(V,E)$ where $V$ is a finite set of vertices
(\emph{i.e.}, the agents) and $E$ is a set of edges. Each agent $v\in V$ has a private cost function 
$f_v: \mathsf X\to  (-\infty,+\infty]$ where $\mathsf X$ is a Euclidean space.
We make the following assumption on functions $f_v$.
\begin{hyp}
\label{hyp:f}$ $\\
 {\it i)} For all $v\in V$, $f_v$ is a proper closed convex function.\\
 {\it ii)} The infimum in~(\ref{eq:pb}) is finite and is attained at some point~$x^*\in\mathsf X$.
%re exists $x\in\mathsf X$ such that for all $v\in V$, $f_v(x) < +\infty$.
\end{hyp}
In order to solve the optimization problem~(\ref{eq:pb}) on the graph $G$, we first provide
an equivalent formulation of~(\ref{eq:pb}) that will be revealed useful.
For some integer $L\geq 1$, consider a finite collection $A_1, A_2,\cdots, A_L$ of subsets of $V$
which we shall refer to as \emph{components}. 
We assume the following condition.
\begin{hyp}
  {\it i)} $\bigcup_{\ell=1}^L A_\ell = V$.\\
  {\it ii)} $\bigcup_{\ell=1}^L G(A_\ell)$ is connected.
%  {\it iii)} For any $\ell=1,\cdots,L$,
\label{hyp:subg}
\end{hyp}
Assumption~\ref{hyp:subg}{\sl i)} implies that any vertex appears in one of the components $A_1,\cdots, A_L$ at least.
We stress the fact that two distinct components $A_\ell$ and $A_{\ell'}$ are not necessarily disjoint, though. 
Assumption~\ref{hyp:subg}{\sl ii)} means that the union of all subgraphs is connected. As the latter union is also a subgraph of $G$,
this implies that $G$ is connected.
As will be made clear below, our algorithms shall assume that all agents in the same component are able to perform simple operations
in a coordinated fashion (\emph{i.e.}, compute a local average over a component). 
Thus, in practice, it is reasonable to require that each subgraph $G(A_\ell)$ is itself connected.

We introduce some notations. We set for any $x\in \mathsf X^V$,
$$
f(x) \triangleq \sum_{v\in V}f_v(x(v))\ .
$$
For any $z=(z_1,\cdots,z_L)\in \mathsf Z\triangleq \mathsf X^{A_1}\times\cdots\times \mathsf X^{A_L}$, we define the closed proper convex function 
$$
g(z) \triangleq \sum_{\ell=1}^L \iota_{\text{sp}(1_{A_\ell})}(z_\ell)
$$
where $\iota_H$ is the indicator function of a set $H$ (equal to zero on $H$ and to $+\infty$ outside).
Here $g(z)$ is equal to zero if for any $\ell$,  $z_\ell$ is constant. Otherwise, $g(z)$ is infinite.
For any $x\in \mathsf X^V$, we define $Mx \triangleq (\Pi_{A_1}x,\cdots, \Pi_{A_L}x)$.
We consider the following optimization problem:
\begin{equation}
\label{eq:pbEquiv}
\underset{x\in\mathsf X^V}{\inf}\   f(x) + g(Mx)
% \begin{array}{rl}
% \underset{x, z}{\min}  & f(x) + g(z) \\
% \st & z = Mx %\forall \ell, \, z_\ell = \Pi_{A_\ell}x
% \end{array}
\end{equation}
\begin{lemma}
  Under Assumption~\ref{hyp:subg}, $x$ is a minimizer of~(\ref{eq:pbEquiv}) if and only if
  $x = \bar x 1_V$ where $\bar x\in \mathsf X$ is a minimizer of~(\ref{eq:pb}). 
\end{lemma}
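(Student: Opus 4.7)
The plan is to use the indicator structure of $g$ to reduce the feasibility set of~(\ref{eq:pbEquiv}) to the constant functions on $V$, and then to observe that on this set the objective coincides with that of~(\ref{eq:pb}).

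First I would remark that $g(Mx)<+\infty$ if and only if $\Pi_{A_\ell}x\in\text{sp}(1_{A_\ell})$ for every $\ell=1,\dots,L$, and in that case $g(Mx)=0$. Hence a minimizer of~(\ref{eq:pbEquiv}) must satisfy $g(Mx)=0$, i.e.\ $x$ is constant on each component~$A_\ell$.

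The main step is to use Assumption~\ref{hyp:subg} to upgrade ``constant on each $A_\ell$'' to ``constant on $V$''. Given any two vertices $v,w\in V$, Assumption~\ref{hyp:subg}{\sl i)} ensures that $v$ and $w$ both belong to some component, and Assumption~\ref{hyp:subg}{\sl ii)} provides a path $v=v_0,v_1,\dots,v_k=w$ in $\bigcup_\ell G(A_\ell)$. Each edge $\{v_i,v_{i+1}\}$ belongs to some $G(A_{\ell_i})$, hence $\{v_i,v_{i+1}\}\subset A_{\ell_i}$, and since $x$ is constant on $A_{\ell_i}$ we get $x(v_i)=x(v_{i+1})$. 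Chaining these equalities yields $x(v)=x(w)$, so $x=\bar x\,1_V$ for some $\bar x\in\mathsf X$. Conversely, every constant function $\bar x\,1_V$ trivially satisfies $g(M(\bar x\,1_V))=0$. I expect this connectivity argument to be the only nontrivial point; everything else is bookkeeping.

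Finally, plugging $x=\bar x\,1_V$ into the objective of~(\ref{eq:pbEquiv}) gives
\[
f(\bar x\,1_V)+g(M(\bar x\,1_V)) \;=\; \sum_{v\in V} f_v(\bar x),
\]
which is precisely the objective of~(\ref{eq:pb}). Therefore the infimum in~(\ref{eq:pbEquiv}) equals the infimum in~(\ref{eq:pb}), and $x$ attains the former if and only if $x=\bar x\,1_V$ with $\bar x$ attaining the latter. This concludes the equivalence.
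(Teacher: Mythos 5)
Your proof is correct and follows essentially the same route as the paper's: the finiteness of $g(Mx)$ forces $x$ to be constant on each component, and the path argument in $\bigcup_\ell G(A_\ell)$ under Assumption~\ref{hyp:subg} upgrades this to constancy on all of $V$, after which the objectives coincide. You merely spell out the final bookkeeping that the paper leaves as ``the result follows.''
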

\begin{proof}
Let $x\in\mathsf X^V$ such that $g(Mx) $ is finite. Then $x$ is constant on each component. 
Let $v,w$ be two arbitrary vertices in $V$. There exists a path in $\bigcup_{\ell=1}^L G(A_\ell)$ connecting $v$ and $w$.
Each edge of this path connects two vertices which belong to a common component. Thus, $x$ is constant on two consecutive vertices of the path.
%Consequently, $x$ is constant on all vertices of the path. 
This proves that $x(v)=x(w)$. Thus, $x$ is constant and the result follows.
% This condition is itself equivalent to the all primary variables being equal due to the connectedness of graph $(V,\bigcup_{\ell=1}^L E_\ell)$ and the fact that $\bigcup_{\ell=1}^L V_\ell = V$. So, when consensus between the primary variables is achieved, the two problems are identical. Furthermore, outside of the consensus state, $g(z) = +\infty$ and thus there can be no optimal outside consensus.
\end{proof}
As noted in~\cite{Eckstein1992}, solving Problem~(\ref{eq:pbEquiv}) is         
equivalent to the search of the zeros of two monotone operators.  
One of possible approaches for that sake is to use ADMM. Although 
the choice of the sets $A_1,\cdots, A_L$ does not change the minimizers of the 
initial problem, it has an impact on the particular form of ADMM used to find  
these minimizers, as we shall see below.

%In the sequel, it will be convenient to think of a component $A_\ell$ as an entity for which a certain amount of coordination is possible.
In order to be more explicit, we provide in this section two important
examples of possible choices for the components $A_1,\cdots, A_L$. 
\begin{example} Let $L=1$ and
  $A_1=V$. Problem~(\ref{eq:pbEquiv}) writes
  \begin{equation*}
    \underset{x\in\mathsf X^V}{\inf}\   f(x) + \iota_{\text{sp}(1_V)}(x)\ ,
  \end{equation*}
  In this case, the formulation is identical to 
  \cite[Chapter 7]{Boyd2011}. 
% The latter formulation shall yield a
%   parallelized but still centralized form of ADMM.
\end{example}

\begin{example}
\label{ex:pairwise} 
  Let $L=|E|$ and $\{A_1,\cdots, A_L\}=E$. That is, each set $A_\ell$
  is a pair of vertices $\{v,w\}$ such that $\{v,w\}$ is an
  edge. Problem~(\ref{eq:pbEquiv}) writes
$$
\underset{x\in\mathsf X^V}{\inf}\ f(x) + \sum_{v\sim
  w}\iota_{\text{sp}(\boldsymbol 1_2)}\!\left(
  \begin{array}[h]{@{}c@{}}
    x(v) \\ x(w)
  \end{array}\right)
$$
where $\boldsymbol 1_2$ stands for the vector $(1,1)^T$.
\end{example}

\section{Synchronous ADMM}
\label{sec:sync-admm}

\subsection{General facts}

We now apply the standard ADMM to Problem~(\ref{eq:pbEquiv}).
Perhaps the most direct way to describe ADMM is to reformulate the unconstrained problem~(\ref{eq:pbEquiv})
into the following constrained problem: Minimize $f(x)+g(z)$ subject to $z=Mx$.
For any $x\in \mathsf X^V$, $\lambda,z\in\mathsf Z$, the augmented Lagrangian is given by
\begin{equation}
\label{eq:alag}
\mathcal{L}_\rho(x,z;\lambda) \triangleq \hspace*{-1mm} f(x) + g(z) + \langle \lambda , Mx-z \rangle + \frac{\rho}{2}\left\| Mx - z \right\|^2
\end{equation}
where $\rho>0$ is a constant. ADMM consists of the iterations 
\begin{subequations} \label{eq:LADMM}
\begin{align}
\hspace*{-3mm} x^{k+1} &  = \underset{x\in \mathsf X^V}{\argmin} ~ \mathcal{L}_\rho(x,z^k;\lambda^k)  \label{eq:LADMMx}\\
\hspace*{-3mm} z^{k+1} &  = \underset{z\in\mathsf Z}{\argmin} ~ \mathcal{L}_\rho(x^{k+1},z;\lambda^k) \label{eq:LADMMz} \\
\hspace*{-3mm}\lambda^{k+1} & =   \lambda^{k} + \rho \left( Mx^{k+1} - z^{k+1} \right) . \label{eq:LADMMl}
\end{align}
\end{subequations}
% In order to shorten the notations, it is convenient to replace $\lambda^k$ with its scaled version 
% $\beta^k = \lambda^k/\rho$.
% Convergence of the above algorithm is a straightforward consequence of standard results on ADMM.
From \cite[Chap. 3.2]{Boyd2011}, the following result is immediate.
\begin{theo} %[{see \cite[Chap. 3.2]{Boyd2011}}]
% Assume that the functions $f$ and $g$ are closed, proper and convex. Assume
% in addition that the unaugmented Lagrangian ${\mathcal L}_0(x, z; \lambda)$ 
% has a saddle point. Then
Under Assumption~\ref{hyp:f}, the sequence $(x^k)$ defined in (\ref{eq:LADMMx}) 
converges to a minimizer of~\eqref{eq:pbEquiv}.
\end{theo}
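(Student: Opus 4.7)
The plan is to reduce the statement to the classical ADMM convergence theorem cited from \cite[Chap.~3.2]{Boyd2011}, by verifying that the reformulation $\min f(x)+g(z)\ \text{s.t.}\ z=Mx$ satisfies the hypotheses of that theorem. In other words, I would not attempt to re-prove ADMM convergence from scratch; I would only exhibit the structural properties of $f$, $g$ and $M$ that place us squarely in the setting of the standard theorem.

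First, I would check that $f$ and $g$ are closed proper convex on their respective Euclidean spaces. For $f(x)=\sum_{v\in V}f_v(x(v))$, this is a separable sum over independent coordinates of functions which are closed proper convex by Assumption~\ref{hyp:f}{\sl i)}. For $g(z)=\sum_\ell \iota_{\text{sp}(1_{A_\ell})}(z_\ell)$, each summand is the indicator of a linear (hence closed convex) subspace, so $g$ is closed proper convex as well. The coupling map $M:\mathsf X^V\to\mathsf Z$ is manifestly linear (and bounded, since everything is finite-dimensional).

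Second, I would argue existence of a saddle point of the unaugmented Lagrangian $\mathcal{L}_0(x,z;\lambda)=f(x)+g(z)+\langle\lambda,Mx-z\rangle$. By Assumption~\ref{hyp:f}{\sl ii)} combined with Lemma~1, a primal optimum $x^\star=\bar x\,1_V$ exists, and $(x^\star,Mx^\star)$ is feasible with finite value. The equality constraint $z=Mx$ is affine and trivially satisfied in the relative interior of $\mathrm{dom}(f)\times\mathrm{dom}(g)$, so standard Fenchel--Rockafellar duality yields a dual optimum $\lambda^\star$. With the saddle-point assumption in force, the theorem of \cite[Chap.~3.2]{Boyd2011} delivers $Mx^{k}-z^{k}\to 0$, $f(x^k)+g(z^k)\to p^\star$, and $\lambda^k\to\lambda^\star$.

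The main obstacle, and the one step worth flagging, is that Boyd's textbook theorem yields residual and objective convergence but not, per se, convergence of the iterate sequence $(x^k)$ itself to a minimizer. To bridge this gap I would invoke the Douglas--Rachford interpretation of ADMM (the same viewpoint the paper uses in Section~\ref{sec:randprox}): the $\lambda^k$ update is a nonexpansive fixed-point iteration whose iterates converge in a Hilbert space, and the associated $z^k$ and $x^k$ updates converge by continuity of the proximal maps. Since $g(z^k)<+\infty$ forces $z^k_\ell\in\mathrm{sp}(1_{A_\ell})$ for every $\ell$, the vanishing residual $Mx^k-z^k\to 0$ then implies that $x^k$ asymptotically agrees across edges of every $G(A_\ell)$, so by the connectivity argument of Lemma~1 the limit of $x^k$ is of the form $\bar x^\star 1_V$ with $\bar x^\star$ a minimizer of~(\ref{eq:pb}).
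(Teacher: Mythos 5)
Your proof is correct and is, in substance, more careful than the paper's own treatment, which disposes of this theorem in one line by declaring it ``immediate'' from \cite[Chap.~3.2]{Boyd2011}. The routine verification that $f$ and $g$ are closed proper convex and $M$ linear is common to both. What you add, rightly, is the observation that the cited theorem only delivers residual, objective and dual-variable convergence, not convergence of the primal iterates $(x^k)$; your patch via the Douglas--Rachford interpretation is exactly the right repair and is the same machinery the paper itself deploys later (Lemmas~\ref{lemma:prox} and~\ref{lemma:LM}): the underlying $\zeta^k$ iteration is a firmly nonexpansive fixed-point iteration, hence convergent, and $x^{k+1}$ is a continuous function of $(z^k,\lambda^k)$ because the $x$-subproblem reduces to the proximity operators in~(\ref{eq:proxx}). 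Note that this last point uses Assumption~\ref{hyp:subg}\,{\it i)} (so that $|\sigma(v)|\geq 1$ and $M^*M$ is positive definite, making the $x$-update single-valued), an assumption the theorem statement omits but which is needed even to define the iterates. Two small caveats on your argument. First, the claim that the constraint qualification is ``trivially satisfied'' is glib: the relevant condition is $M(\mathrm{ri}\,\mathrm{dom}f)\cap\mathrm{dom}\,g\neq\emptyset$, which here amounts to $\bigcap_{v}\mathrm{ri}(\mathrm{dom}f_v)\neq\emptyset$ and does not follow from Assumption~\ref{hyp:f} alone (the paper is equally loose when invoking Fenchel duality before Lemma~\ref{lemma:LM}); it is automatic when the $f_v$ are finite-valued. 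Second, in your final step, consensus of the limit does not by itself make it a minimizer; close the loop either with the KKT conditions at the Douglas--Rachford fixed point, or by noting $g(z^k)=0$ for all $k$, so $f(x^k)\to p^\star$ and lower semicontinuity of $f$ gives optimality of the limit. Both repairs are one line each.
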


\subsection{Decentralized Implementation}

One should now make~(\ref{eq:LADMM}) more explicit 
and convince the reader that the iterations are indeed amenable to distributed implementation.
Due to the specific form of function $g$, it is clear from~(\ref{eq:LADMMz})
that all components $z_1^k,\cdots,z_L^k$ of $z^k$ are constant. Otherwise stated, 
$z^k = (\bar z_1^k1_{A_1},\cdots,\bar z_L^k1_{A_L})$ for some constants $\bar z_\ell^k\in \mathsf X$.
For any $v\in V$, we set
$$
\sigma(v) \triangleq \{\ell\,:\,v\in A_\ell\}\ .
$$
Now consider the first update equation~(\ref{eq:LADMMx}). Getting rid of all quantities in $\mathcal L_\rho$
which do not depend on the $v$th component of $x$, we obtain for any $v\in V$
$$
x^{k+1}(v) = \underset{y\in \mathsf X}{\argmin}~ f_v(y) + \sum_{\ell\in\sigma(v)}\langle\lambda_\ell^k(v),y\rangle+\frac\rho 2 \|y-\bar z^k_\ell\|^2\,.
$$
After some algebra, the above equation further simplifies to
\begin{equation}
  \label{eq:proxx}
  x^{k+1}(v) = \text{prox}_{f_v,\rho |\sigma(v)|}\left( %\frac 1{|\sigma(v)|}\sum_{\ell\in\sigma(v)} \bar z^k_\ell + \beta^k_\ell(v)\right)\,,
Z^k(v) - B^k(v)\right)
\end{equation}
%where we again recall the notation $\beta^k = \lambda^k/\rho$.
where we introduced the following constants:
\begin{eqnarray}
\label{eq:Z} Z^k(v) = \frac 1{|\sigma(v)|}\sum_{\ell\in\sigma(v)}\hspace*{-2mm} \bar z^k_\ell, B^k(v) = \frac 1{\rho|\sigma(v)|}\sum_{\ell\in\sigma(v)} \hspace*{-2mm} \lambda^k_\ell(v)\,.
\end{eqnarray}
It is straightforward to show that the second update equation (\ref{eq:LADMMz}) admits as well a simple decomposable form.
After some algebra, we obtain that for any $\ell=1,\cdots,L$,
\begin{equation}
  \label{eq:2}
  \bar z_\ell^{k+1} = \frac 1{|A_\ell|}\sum_{v\in A_\ell} x^{k+1}(v) + \frac{\lambda_\ell^k(v)}\rho\,.
\end{equation}
Finally, for all $\ell=1,\cdots,L$ and $v\in A_\ell$, equation~(\ref{eq:LADMMl}) reads  
\begin{equation}
\lambda_\ell^{k+1}(v) = \lambda_\ell^{k}(v)+ \rho(x^{k+1}(v)-\bar z_\ell^{k+1})\,.
\label{eq:4}
\end{equation}
Averaging~(\ref{eq:4})  w.r.t. $v$ and using~(\ref{eq:2}) yields $\sum_{v\in A_\ell}\lambda_\ell^{k}(v)=0$.
Thus, the second term in the RHS of~(\ref{eq:2}) can be deleted. Finally, averaging~(\ref{eq:4}) w.r.t. $\ell$ leads to
\begin{equation}
B^{k+1}(v) = B^{k}(v)+ x^{k+1}(v)-Z^{k+1}(v)\,.
   \label{eq:LADMMlter}
\end{equation}

  \noindent {\bf Synchronous ADMM}:
\noindent \hrulefill
\nopagebreak\\
\noindent At each iteration $k$, \\
For each agent $v$, compute $x^{k+1}(v)$ using~(\ref{eq:proxx}).\\
In each components $\ell=1,\cdots,L$, compute $$\bar z_\ell^{k+1} = \frac 1{|A_\ell|}\sum_{w\in A_\ell} x^{k+1}(w).$$
For each agent $v$, compute $Z^{k+1}(v)$ and $B^{k+1}(v)$ using~(\ref{eq:Z}) and~(\ref{eq:LADMMlter}) respectively.

\noindent \hrulefill
\smallskip

The above algorithm implicitly requires the existence of a routine for computing an average, in each component $A_\ell$.
This requirement is mild when the components coincide with edges of the graph as in Example~2. In this case, one only
needs that the two vertices of an edge share their current estimate and find an agreement on the average.
In the general case, the objective can be achieved by selecting a leader in each component whose role is to gather
the estimates, compute the average and send the result to all agents in this component.

It is worth noting that in the case of Example~1, the synchronous ADMM described above coincides with the algorithm of \cite{Boyd2011}.

%%%%%%%%%%%%%%%%%%%%%%%%%%%%%%%%%%%%%%%%%%%%%%%%%%%%%%%%%%%%%%%%%%%%%%%%%%%%%%%%
\section{A Randomized Proximal Algorithm}
\label{sec:randprox}

%By looking at (\ref{eq:ADMM}), we remark that if one updates only one block of the Lagrangian variable, say $\ell$, only $\overline{z}_\ell$ is needed and thus only the agents of $ V_\ell$  need to perform a primal update. Hence, updating $\ell$ by block (which corresponds to descending the opposite of the dual function $\mathcal{D}$ only in one subspace) seems to diminish the number of primal updates and communications. Furthermore, by choosing the updated block randomly, we may be able to derive a asynchronous algorithm.

%In order to derive random block iterations of the ADMM algorithm, one has to consider monotone operators theory. In sections \ref{sec:mono} and \ref{sec:DR}, we introduce the formalism leading to ADMM, then in section \ref{sec:randomgs} we derive our main result about monotone operators and derive our flagship algorithm in section \ref{sec:algo}.

\subsection{Monotone operators}
\label{sec:mono}

An operator $\oT$ on a Euclidean space $\oY$ is a set valued mapping 
$\oT : {\oY} \to 2^{\oY}$. 
An operator can be equivalently identified with a subset of 
$\oY \times \oY$, and we write $(x,y)\in\oT$ when
$y\in\oT(x)$. Given two operators $\oT_1$ and $\oT_2$ on $\oY$ and two
real numbers $\alpha_1$ and $\alpha_2$, the operator $\alpha_1 \oT_1 + \alpha_2
\oT_2$ is defined as $\alpha_1 \oT_1 + \alpha_2 \oT_2 = \{ (x, \alpha_1 y_1 +
\alpha_2 y_2) \, : \, (x, y_1) \in \oT_1, \, (x, y_2) \in \oT_2 \}$.  
The identity operator is $\oI = \{ (x,x) : x \in \oY\}$ and the inverse
of the operator $\oT$ is $\oT^{-1} = \{ (x,y) : (y,x)\in \oT \}$. 
The operator $\oT$ is said \emph{monotone} if 
\[
\forall \ (x,y),(x',y') \in \oT, \ 
\langle x - x' , y- y' \rangle \geq 0 . 
\]
A monotone operator is said \emph{maximal} if it is not strictly contained in any 
monotone operator (as a subset of $\oY \times \oY$). 
Finally, $\oT$ is said \emph{firmly non-expansive} if 
\[
\forall \ (x,y),(x',y') \in \oT, \ 
\langle x - x' , y- y' \rangle \geq \| y - y' \|^2. 
\]
The typical example of a monotone operator is the subdifferential 
$\partial f$ of a convex function $f : \oY \to \RR$. Finding a minimum
of $f$ amounts to finding a point in $\zer(\partial f)$, where 
$\zer(\oT) = \{ x : 0 \in \oT(x) \}$ is the set of zeroes of an operator $\oT$. 
A common technique for finding a zero of a maximal monotone operator $\oT$ is 
the so-called \textit{proximal point algorithm} \cite{Rockafellar1976} that
we now describe. 
The \emph{resolvent} of $\oT$ is the operator 
$\oJ_{\rho\oT} \triangleq (\oI + \rho \oT)^{-1} $ for $\rho>0$. 
One key result (see \emph{e.g.} \cite{Eckstein1992}) 
says that $\oT$ is maximal monotone if and only if $\oJ_{\rho\oT}$ is 
firmly non expansive and its domain is $\oY$.  
Observe that a firmly non expansive operator is single valued and denote by
$\fix(\oJ_{\rho\oT})$ the set of fixed points of $\oJ_{\rho\oT}$. It is
clear that $\fix(\oJ_{\rho\oT}) = \zer(\oT)$. 
The firm non expansiveness of $\oJ_{\rho\oT}$ plays a central role in the 
proof of the following result: 
\begin{lemma}[Proximal point algorithm \cite{Rockafellar1976}]
\label{lemma:prox}
If $\oT$ is a maximal monotone operator and $\rho>0$, then the iterates 
$\zeta^{k+1} = \oJ_{\rho\oT} (\zeta^k)$ starting at any point of $\oY$
converge to a point of $\fix(\oJ_{\rho\oT})$ whenever this set is non-empty.
\end{lemma}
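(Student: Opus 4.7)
The plan is to exploit the firm non-expansiveness of $\oJ_{\rho\oT}$ established immediately above the lemma, and to carry out the standard Krasnoselskii-Mann-type argument, simplified by the fact that $\oY$ is finite-dimensional. Writing $\oJ := \oJ_{\rho\oT}$, I would fix any $\zeta^\star \in \fix(\oJ)$ (non-empty by hypothesis). Applying the defining inequality of firm non-expansiveness to the pairs $(\zeta^k,\zeta^{k+1})$ and $(\zeta^\star,\zeta^\star)$ in $\oJ$, together with the polarization identity $2\langle a,b\rangle = \|a\|^2 + \|b\|^2 - \|a-b\|^2$ applied to $a=\zeta^k-\zeta^\star$, $b=\zeta^{k+1}-\zeta^\star$, I expect to obtain the key one-step estimate
$$
\|\zeta^{k+1}-\zeta^\star\|^2 + \|\zeta^{k+1}-\zeta^k\|^2 \ \leq\ \|\zeta^k-\zeta^\star\|^2. \qquad (\star)
$$

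From $(\star)$ two consequences follow immediately. The sequence $(\|\zeta^k - \zeta^\star\|)$ is non-increasing, so $(\zeta^k)$ is Fej\'er monotone with respect to $\fix(\oJ)$ and, in particular, bounded. Telescoping $(\star)$ also gives $\sum_{k\geq 0}\|\zeta^{k+1}-\zeta^k\|^2 < \infty$, hence $\zeta^{k+1}-\zeta^k \to 0$.

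Next I would extract a cluster point. Because $\oY$ is Euclidean, boundedness yields a subsequence $\zeta^{k_j} \to \zeta^\infty$ for some $\zeta^\infty\in\oY$. The operator $\oJ$, being firmly non-expansive, is single valued and (Lipschitz) continuous, so $\oJ(\zeta^{k_j}) = \zeta^{k_j+1}\to\oJ(\zeta^\infty)$; combined with $\zeta^{k_j+1}-\zeta^{k_j}\to 0$ this forces $\oJ(\zeta^\infty)=\zeta^\infty$, i.e., $\zeta^\infty\in\fix(\oJ)$.

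Finally, to upgrade subsequential convergence to full convergence, I would re-apply $(\star)$ with $\zeta^\star := \zeta^\infty$: the non-negative sequence $(\|\zeta^k-\zeta^\infty\|)$ is non-increasing and admits a subsequence tending to zero, hence converges to zero. Therefore $\zeta^k\to\zeta^\infty\in\fix(\oJ_{\rho\oT})$, which is the desired conclusion. I do not expect any substantive obstacle here; the only point that needs a careful check is the derivation of $(\star)$ from firm non-expansiveness, and the finite-dimensionality of $\oY$ sidesteps the weak-convergence and Opial-lemma subtleties that would otherwise arise in a general Hilbert-space setting.
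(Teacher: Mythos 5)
Your proof is correct. The paper itself gives no proof of this lemma --- it is quoted from Rockafellar's 1976 paper --- so there is nothing to diverge from; but your argument is exactly the standard one, and the derivation of $(\star)$ from firm non-expansiveness via polarization is right: applying the definition to the pairs $(\zeta^k,\zeta^{k+1})$ and $(\zeta^\star,\zeta^\star)$ gives $\langle \zeta^k-\zeta^\star,\zeta^{k+1}-\zeta^\star\rangle \geq \|\zeta^{k+1}-\zeta^\star\|^2$, and polarization turns this into $\|\zeta^{k+1}-\zeta^\star\|^2+\|\zeta^{k+1}-\zeta^k\|^2\leq\|\zeta^k-\zeta^\star\|^2$. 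It is worth noticing that your deterministic argument is precisely the skeleton that the paper's own proof of Theorem~\ref{theo:main} randomizes: inequality $(\star)$ becomes the supermartingale inequality \eqref{surmartingale}, the telescoping step becomes the summability of $\EE[\|\oS(\zeta^k)-\zeta^k\|^2]$ giving {\bf C2}, and your final ``re-apply $(\star)$ with $\zeta^\star:=\zeta^\infty$'' step is replaced by the countable-dense-subset device needed to get a single almost-sure event on which $\leftnorm\zeta^k-\zeta^\star\rightnorm$ converges for \emph{every} fixed point simultaneously. Your remark that finite-dimensionality sidesteps the Opial-lemma subtleties is also accurate; in a general Hilbert space only weak convergence would be obtained by this route.
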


\subsection{Random Gauss-Seidel iterations} 
\label{sec:randomgs}

Assume now that the Euclidean space $\oY$ is a Cartesian product of Euclidean
spaces of the form $\oY = \oY_1 \times \cdots \times \oY_L$ where $L$ is a 
given integer, and write any $\zeta \in \oY$ as 
$\zeta = ( \zeta_1, \ldots, \zeta_L )$ where $\zeta_\ell \in \oY_\ell$ for
$\ell=1,\ldots, L$. 
Let $\oS$ be a firmly non expansive operator on $\oY$ and write  
\[
\oS(\zeta) = \left( \oS_1(\zeta) , ... , \oS_L(\zeta)  \right)
\] 
where $\oS_\ell(\zeta) \in \oY_\ell$. For $\ell=1,\ldots,L$, define the single
valued operator $\hat\oS_\ell : \oY \to \oY$ as 
\begin{equation}
\label{hat-S} 
\hat\oS_\ell(\zeta) = \left( \zeta_1, \ldots, \zeta_{\ell-1}, 
\oS_\ell(\zeta), \zeta_{\ell+1},\ldots, \zeta_L \right) .
\end{equation} 
Considering an iterative algorithm of the form $\zeta^{k+1} = \oS (\zeta^k)$, 
its \emph{Gauss-Seidel} version would be an algorithm of the form
$\zeta^{k+1} = \hat{\oS}_L \circ \cdots \circ \hat{\oS}_1(\zeta^k)$. We are
interested here in a \emph{randomized version} of these iterates. On a 
probability space $(\Omega, {\mathcal F}, \PP)$, let  
$(\xi^k)_{k\in \NN}$ be a random process satisfying the 
following assumption: 
\begin{hyp}
\label{hyp:xi} 
The random variables $\xi^k$ are independent and identically distributed. 
They are valued in the set $\{ 1,\ldots, L\}$ with 
$\PP[\xi^1 = \ell] = p_\ell > 0$ for all $\ell=1,\ldots, L$.
\end{hyp} 
We are interested here in the convergence  
of the random iterates $\zeta^{k+1} = \hat\oS_{\xi^{k+1}}(\zeta^k)$ towards
a (generally random) point of $\fix(\oS)$, provided this set is non empty: 

\begin{theo}[Main result]
\label{theo:main}
Let $\oS$ is a firmly non-expansive operator on $\oY$ with domain $\oY$. 
Let $(\xi^k)_{k\in\NN}$ be a sequence of random variables satisfying
Assumption \ref{hyp:xi}. Assume that $\fix(\oS) \neq \emptyset$. Then for 
any initial value $\zeta^0$, the sequence of iterates 
$\zeta^{k+1} = \hat{\oS}_{\xi^{k+1}} (\zeta^k)$ converges almost surely to
a random variable supported by $\fix(\oS)$. 
\end{theo}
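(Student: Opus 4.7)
The plan is to adapt the classical Fejér--monotonicity and Opial arguments from the deterministic proximal point algorithm to the stochastic Gauss--Seidel setting by introducing a carefully \emph{reweighted} Lyapunov function. Fix any $\zeta^\star \in \fix(\oS)$ (non-empty by assumption) and set $\mathcal F_k = \sigma(\xi^1,\ldots,\xi^k)$. The candidate Lyapunov function is
\[
W^k(\zeta^\star) = \sum_{\ell=1}^L \frac{1}{p_\ell}\,\|\zeta^k_\ell - \zeta^\star_\ell\|^2,
\]
where the weights $1/p_\ell$ exactly compensate the probability that block $\ell$ is selected, so that in expectation each block contributes as in a full proximal step.

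The key one-step calculation is then short. Since only the $\xi^{k+1}$-th block is modified at step $k+1$, $W^{k+1}(\zeta^\star)$ differs from $W^k(\zeta^\star)$ only in that term, and conditioning on $\mathcal F_k$ the factor $1/p_{\xi^{k+1}}$ cancels against $p_{\xi^{k+1}}$ to give
\[
\EE[W^{k+1}(\zeta^\star)\mid\mathcal F_k] = W^k(\zeta^\star) + \|\oS(\zeta^k)-\zeta^\star\|^2 - \|\zeta^k-\zeta^\star\|^2.
\]
Firm non-expansiveness of $\oS$ applied at $(\zeta^k,\zeta^\star)$ with $\oS(\zeta^\star)=\zeta^\star$ yields the standard bound $\|\oS(\zeta^k)-\zeta^\star\|^2 \leq \|\zeta^k-\zeta^\star\|^2 - \|\oS(\zeta^k)-\zeta^k\|^2$. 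Hence $W^k(\zeta^\star)$ is a non-negative supermartingale whose expected decrement dominates $\|\oS(\zeta^k)-\zeta^k\|^2$. Standard martingale convergence then gives that $W^k(\zeta^\star)$ converges almost surely to an integrable limit, that $(\zeta^k)$ is a.s.\ bounded, and that $\sum_k \|\oS(\zeta^k)-\zeta^k\|^2 < \infty$ a.s., so in particular $\|\oS(\zeta^k)-\zeta^k\|\to 0$ a.s. Continuity of $\oS$ (which is non-expansive) then ensures that every cluster point of $(\zeta^k)$ belongs to $\fix(\oS)$.

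It remains to promote cluster-point convergence to convergence of the full sequence. I would work in the Hilbert space $\oY$ equipped with the weighted inner product $\langle x,y\rangle_W = \sum_\ell p_\ell^{-1}\langle x_\ell,y_\ell\rangle$ and run an Opial-type argument: if $W^k(\zeta^\star)$ converges for every $\zeta^\star \in \fix(\oS)$, then expanding $W^k(\zeta^{\star,1}) - W^k(\zeta^{\star,2})$ shows that $\langle \zeta^k,\zeta^{\star,1}-\zeta^{\star,2}\rangle_W$ has a limit, from which any two cluster points must coincide. The subtle point---and what I see as the main obstacle---is that the supermartingale argument produces one probability-one event per $\zeta^\star$, whereas $\fix(\oS)$ is in general uncountable. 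Since $\oS$ is firmly non-expansive, $\fix(\oS)$ is closed and convex, hence separable; picking a countable dense subset $\{\zeta^{\star,i}\}_{i\in\NN}$ and intersecting the associated probability-one events yields one event of probability one on which all $W^k(\zeta^{\star,i})$ converge, and density together with the Fejér-type boundedness in the weighted norm extends this conclusion pathwise to every $\zeta^\star\in\fix(\oS)$. Once this uniformization over $\fix(\oS)$ is in hand, Opial's argument applies and identifies a unique limit in $\fix(\oS)$.
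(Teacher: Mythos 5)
Your proposal is correct and follows essentially the same route as the paper: the same weighted norm $\sum_\ell p_\ell^{-1}\|\cdot\|^2$ as a supermartingale Lyapunov function, the same firm-non-expansiveness bound yielding summability of $\|\oS(\zeta^k)-\zeta^k\|^2$, and the same uniformization over a countable dense subset of $\fix(\oS)$. The only (inessential) difference is the last step: where you invoke an Opial-type inner-product argument to identify two cluster points, the paper simply notes that any cluster point $\zeta_1^\star$ lies in $\fix(\oS)$, so $\leftnorm\zeta^k-\zeta_1^\star\rightnorm$ converges with $\liminf$ equal to zero, forcing the whole sequence to converge to it.
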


\begin{proof}
Denote by $\langle \zeta, \eta \rangle = \sum_{\ell=1}^L 
\langle \zeta_\ell, \eta_\ell \rangle_{\oY_\ell}$ the inner product of
$\oY$, and by $\| \zeta \|^2 = \langle \zeta, \zeta \rangle$
its associated squared norm. Define a new inner product
$\zeta \bullet \eta = \sum_{\ell=1}^K p_\ell^{-1} 
\langle \zeta_\ell, \eta_\ell \rangle_{\oY_\ell}$ on $\oY$, and let 
$\leftnorm \zeta \rightnorm^2 = \zeta \bullet\zeta$ 
be its associated squared norm. 
Fix $\zeta^\star$ in $\fix(\oS)$. Conditionally
to the sigma-field $\mathcal{F}_k = \sigma(\xi^1,\ldots, \xi^k)$ we have
\begin{align*}
% l1
 & \EE[\leftnorm\zeta^{k+1} - \zeta^\star\rightnorm^2 \, | \, 
 \mathcal{F}_k ] 
= \sum_{\ell=1}^L p_\ell 
     \leftnorm \hat{\oS}_{\ell} (\zeta^k) -  \zeta^\star\rightnorm^2 \\
% l1
 & = \sum_{\ell=1}^L p_\ell  \Bigl(  
\frac{1}{p_\ell}  \| \oS_{\ell}(\zeta^k) - \zeta_\ell^\star\|^2_{\oY_\ell} +  
\sum_{\underset{i \neq \ell}{i=1}}^L  \frac{1}{p_i}  
\| \zeta_i^k -  \zeta_i^\star\|^2_{\oY_i} \Bigr)   \\
% l1
 & =  \| \oS(\zeta^k) - \zeta^\star\|^2 +  
\sum_{\ell=1}^L  \frac{1-p_\ell}{p_\ell}  
\| \zeta_\ell^k - \zeta_\ell^\star\|^2_{\oY_\ell}   \\
% l1
 & =   \leftnorm \zeta^k -  \zeta^\star\rightnorm^2 +  
\| \oS(\zeta^k) - \zeta^\star\|^2 - \| \zeta^k -  \zeta^\star\|^2   
\end{align*}
Since $(\oI - \oS)(\zeta^\star) = 0$, we have 
\begin{align}
\nonumber & \| \oS(\zeta^k) - \zeta^\star\|^2 - \| \zeta^k -  \zeta^\star\|^2  \\
% l1
\nonumber & = \| \oS(\zeta^k) - \zeta^k + \zeta^k - \zeta^\star\|^2 - 
                                       \| \zeta^k -  \zeta^\star\|^2  \\
\nonumber & =  \| \oS(\zeta^k) - \zeta^k\|^2  
+ 2 \langle \oS(\zeta^k) - \zeta^k  , \zeta^k -  \zeta^\star \rangle \\
% l1
\nonumber & =  \| \oS(\zeta^k) - \zeta^k\|^2  
 - 2 \langle  (\oI - \oS)(\zeta^k) - (\oI - \oS)(\zeta^\star) , 
                                     \zeta^k -  \zeta^\star \rangle \\
% l1
\nonumber  & \leq  -  \| \oS(\zeta^k) - \zeta^k\|^2 
\end{align}
where the inequality comes from the easily verifiable fact that 
$(\oI - \oS)$ is firmly non-expansive when $\oS$ is. This leads to the 
inequality 
\begin{equation}
\label{surmartingale} 
 \EE[\leftnorm\zeta^{k+1} - \zeta^\star\rightnorm^2  | \mathcal{F}_k ] 
  \leq  \leftnorm \zeta^k -  \zeta^\star\rightnorm^2 -  
                                  \| \oS(\zeta^k) - \zeta^k\|^2 
\end{equation} 
which shows that $\leftnorm\zeta^{k} - \zeta^\star\rightnorm^2$ is a 
nonnegative supermartingale with respect to the filtration $({\mathcal F}_k)$. 
As such, it converges with probability one towards a random variable 
$X_{\zeta^\star}$ satisfying $0\leq X_{\zeta^\star} < \infty$ almost
everywhere. 
Given a countable dense subset $H$ of $\fix(\oS)$, there is a probability
one set on which $\leftnorm \zeta^k - {\boldsymbol \zeta} 
\rightnorm \to X_{\boldsymbol\zeta} \in [0, \infty)$ for all 
${\boldsymbol\zeta} \in H$. 
Let $\zeta^\star \in \fix(\oS)$, let $\varepsilon > 0$, and choose 
${\boldsymbol\zeta} \in H$ such that $\leftnorm \zeta^\star - \boldsymbol\zeta
\rightnorm \leq \varepsilon$. With probability one, we have  
\[
\leftnorm \zeta^k - \zeta^\star \rightnorm \leq  
\leftnorm \zeta^k - \boldsymbol\zeta \rightnorm + 
\leftnorm \boldsymbol\zeta - \zeta^\star \rightnorm \leq 
X_{\boldsymbol\zeta} + 2 \varepsilon 
\]  
for $k$ large enough. Similarly, 
$\leftnorm \zeta^k - \zeta^\star \rightnorm \geq  
X_{\boldsymbol\zeta} - 2 \varepsilon$ for $k$ large enough.  
We therefore obtain: 
\begin{description}
\item[{\bf C1 :}] There is a probability one set on which
$\leftnorm \zeta^k - \zeta^\star \rightnorm$ converges for every 
$\zeta^\star \in \fix(\oS)$. 
\end{description} 
Getting back to Inequality \eqref{surmartingale}, taking the expectations on
both sides of this inequality and iterating over $k$, we obtain 
\[
\sum_{k=0}^\infty \EE [ \| \oS(\zeta^k) - \zeta^k\|^2 ] \leq 
(\zeta^0 - \zeta^\star)^2 . 
\]
By Markov's inequality and Borel Cantelli's lemma, we therefore obtain: 
\begin{description}
\item[{\bf C2 :}] $S(\zeta^k) - \zeta^k \to 0$ almost surely. 
\end{description} 
We now consider an elementary event in the probability one set where {\bf C1} 
and {\bf C2} hold. On this event, 
since $\leftnorm \zeta^k - \zeta^\star \rightnorm$ converges for 
$\zeta^\star \in \fix(\oS)$, the sequence $\zeta^k$ is bounded. 
Since $\oS$ is firmly non expansive, it is continuous, and {\bf C2} shows that 
all the accumulation points of $\zeta^k$ are in $\fix(\oS)$. It remains to show
that these accumulation points reduce to one point. Assume that 
$\zeta_1^\star$ is an accumulation point. By {\bf C1}, 
$\leftnorm \zeta^k - \zeta^\star_1 \rightnorm$ converges. Therefore, 
$\lim \leftnorm \zeta^k - \zeta^\star_1 \rightnorm = 
\liminf \leftnorm \zeta^k - \zeta^\star_1 \rightnorm = 0$, which shows that
$\zeta^\star_1$ is unique. 
\end{proof}

\section{Random ADMM}
\label{sec:randadmm} 

We now return to the optimization problem~(\ref{eq:pbEquiv}).
It is a well known fact that the standard ADMM can be seen as special case of the so-called Douglas-Rachford algorithm~\cite{Eckstein1992}.
%applied on the \emph{dual} problem of~(\ref{eq:pbEquiv}) (see details below). 
The Douglas-Rachford algorithm can itself be seen as a special case of a proximal point algorithm.
By the results of the previous section, this suggests that random Gauss-Seidel iterations applied to the Douglas-Rachford operator
produce a sequence which eventually converges to the sought solutions.
It turns out that the latter random iterations can be written under the form of practical asynchronous ADMM-like algorithm.

\subsection{Douglas-Rachford operator}

Consider the following dual problem associated with~(\ref{eq:pbEquiv}) 
\begin{equation}
\label{eq:dual}
\min_{\lambda\in \mathsf Z} f^*(-M^*\lambda) + g^*(\lambda)\,,
\end{equation}
where $f^*, g^*$ are the Fenchel conjugates of $f$ and $g$ and $M^*$ is the adjoint of $M$.
By Assumption~\ref{hyp:f} along with~\cite[Th.3.3.5]{Borwein2006}, the 
minimum in~(\ref{eq:dual}) is attained and its opposite coincides with the 
minimum of~(\ref{eq:pbEquiv}).
Note that $\lambda$ is a minimizer of~(\ref{eq:dual}) iff 
zero belongs to the subdifferential of the objective function in~(\ref{eq:dual}). %$f^*\circ (-M^*)+g^*$ at $\lambda$. 
By \cite[Th.3.3.5]{Borwein2006} again, this reads
$0\in-M\cdot\partial f^*(-M^*\lambda) + \partial g^*(\lambda)$.
Otherwise stated, finding minimizers of the dual problem~(\ref{eq:dual}) boils
down to searching zeros of the sum of two maximal monotone operators $\oT+\oU$
defined by $\oT = -M\cdot\partial f^*\circ (-M^*)$ and $\oU = \partial g^*$. 
For a fixed $\rho>0$, the Douglas-Rachford~/~Lions-Mercier operator $\oR$ is
defined as 
\begin{equation*}
\left\{ (\nu + \rho b, \mu-\nu) : (\mu,b)\in \oU , (\nu,a)\in\oT , \nu + \rho a = \mu - \rho b \right\}.
\end{equation*}
The following Lemma is an immediate consequence of \cite{Eckstein1992}.
\begin{lemma}
\label{lemma:LM}
Under Assumption~\ref{hyp:f}, $\oR$ is maximal monotone, and 
$\zer(\oR) \neq \emptyset$. 
Moreover, $\oJ_{\rho\oU}(\zeta)\in \zer(\oT+\oU)$ for 
any $\zeta\in\zer(\oR)$. 
%If $\oT$ and $\oU$ are maximally monotone operators,  then  $\oS_{\rho,\oT,\oU} $  is also maximal monotone. Furthermore, if $\mathrm{zer}(\oT+\oU) \neq \emptyset$ ,
%$$\zeta^\star \in \mathrm{zer}(\oS_{\rho,\oT,\oU}) \Rightarrow \oJ_{\rho \oU}\zeta^\star \in \mathrm{zer}(\oT + \oU) .$$
\end{lemma}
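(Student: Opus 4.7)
The plan is to follow the classical Eckstein--Bertsekas analysis: identify $\oR$ via its resolvent with the standard Douglas--Rachford splitting map, and deduce its properties from those of the resolvents of $\oT$ and $\oU$. Before anything else I would check that $\oT$ and $\oU$ are individually maximal monotone: $\oU=\partial g^*$ is maximal monotone because $g$ (hence $g^*$) is proper closed convex; and under Assumption~\ref{hyp:f}, a standard chain-rule argument identifies $\oT$ with the subdifferential of the convex function $\lambda\mapsto f^*(-M^*\lambda)$, so it too is maximal monotone.

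The main step is to compute the resolvent $\oJ_\oR=(\oI+\oR)^{-1}$ in closed form. Given $w\in\mathsf Z$, the relation $(z,w-z)\in\oR$ unwinds to the existence of $\nu,\mu,a\in\oT(\nu),b\in\oU(\mu)$ satisfying $z=\nu+\rho b$, $w=\mu+\rho b$, and $\nu+\rho a=\mu-\rho b$. The second equation uniquely forces $\mu=\oJ_{\rho\oU}(w)$ and $b=(w-\mu)/\rho$; substituting, the third reads $\nu+\rho a=2\mu-w=(2\oJ_{\rho\oU}-\oI)(w)$, whence $\nu=\oJ_{\rho\oT}((2\oJ_{\rho\oU}-\oI)(w))$. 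This yields
\[
\oJ_\oR(w) \;=\; \oJ_{\rho\oT}\bigl((2\oJ_{\rho\oU}-\oI)(w)\bigr) + w - \oJ_{\rho\oU}(w),
\]
the classical Douglas--Rachford splitting map, which is single-valued with domain all of $\mathsf Z$. A direct calculation based on the identity $y_i=\mu_i-\nu_i=\rho(a_i+b_i)$ reduces $\langle z_1-z_2,y_1-y_2\rangle$ to $\rho\langle\nu_1-\nu_2,a_1-a_2\rangle+\rho\langle\mu_1-\mu_2,b_1-b_2\rangle\ge 0$ by monotonicity of $\oT$ and $\oU$, so $\oR$ is monotone; combined with the surjectivity of $\oI+\oR$ just established, Minty's theorem then delivers maximal monotonicity.

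For $\zer(\oR)\ne\emptyset$, Assumption~\ref{hyp:f} together with \cite[Th.3.3.5]{Borwein2006} guarantees that the dual problem~(\ref{eq:dual}) attains its minimum at some $\nu^\star\in\zer(\oT+\oU)$; picking $a\in\oT(\nu^\star)$ and $b\in\oU(\nu^\star)$ with $a+b=0$, and setting $\mu=\nu=\nu^\star$, $z^\star=\nu^\star+\rho b$, exhibits $(z^\star,0)\in\oR$. The ``moreover'' assertion then falls out of the same characterization: any $\zeta\in\zer(\oR)$ necessarily comes with $\mu=\nu$ and $a+b=0$ in its defining quadruple, so $\zeta=\nu+\rho b$ with $b\in\oU(\nu)$ means $\oJ_{\rho\oU}(\zeta)=\nu\in\zer(\oT+\oU)$. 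I expect the principal subtlety to be the chain-rule identification of $\oT$ as a subdifferential (equivalently, the maximal monotonicity of the composition $-M\,\partial f^*\circ(-M^*)$), which must appeal to a constraint qualification ensured by Assumption~\ref{hyp:f} and the linear structure of $M$; every remaining step is a mechanical unwinding of the definition of $\oR$.
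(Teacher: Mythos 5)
Your proof is correct and takes the same route as the paper, which states this lemma as an immediate consequence of \cite{Eckstein1992}: your computation of $\oJ_\oR$ as the Douglas--Rachford map $\oJ_{\rho\oT}\circ(2\oJ_{\rho\oU}-\oI)+\oI-\oJ_{\rho\oU}$, the monotonicity decomposition into $\oT$- and $\oU$-terms, Minty's theorem, and the quadruple-based correspondence between $\zer(\oR)$ and $\zer(\oT+\oU)$ are precisely the Eckstein--Bertsekas argument being cited. The constraint qualification you flag for the maximal monotonicity of $\oT=-M\,\partial f^*\circ(-M^*)$ is indeed the only delicate point, and it holds here because $\bigcup_{\ell}A_\ell=V$ makes $M$ injective, hence $M^*$ surjective.
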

Lemma~\ref{lemma:LM} implies that the search for a zero of $\oT+\oU$ boils down to the search of a zero of $\oR$ up to a resolvent step $\oJ_{\rho \oU}$.
To that end, a standard approach is to use a proximal point algorithm of the form $\zeta^{k+1} = \oJ_{\oR}(\zeta^k)$.
By~\cite{Eckstein1992}, it can be shown that this approach is equivalent to the ADMM derived in Section~\ref{sec:admm}.
Here, our aim is different. We shall consider random Gauss-Seidel iterations in order to derive
an asynchronous version of the ADMM.

\subsection{Random Gauss-Seidel Iterations}

Define $\oS \triangleq \oJ_\oR$ as the resolvent associated with the Douglas-Rachford operator $\oR$.
On the space $\mathsf Z = \mathsf X^{A_1}\times\cdots\times \mathsf X^{A_L}$, 
define the operator $\hat\oS_\ell$ as in \eqref{hat-S} for any 
$\ell=1,\cdots,L$. 
Let  $(\xi^k)_{k\in\NN}$ be a random process satisfying 
Assumption~\ref{hyp:xi}. 
The following result is a consequence of Theorem~\ref{theo:main} combined with Lemma~\ref{lemma:LM}. 
\begin{theo}
Let Assumptions~\ref{hyp:f}, \ref{hyp:subg} and \ref{hyp:xi} hold true.
Consider the sequence $(\zeta^k)_k$ defined by $\zeta^{k+1} = \hat \oS_{\xi^{k+1}}(\zeta^k)$.
Then for any initial value $\zeta^0$, the sequence $\lambda^k \triangleq \oJ_{\rho\oU}(\zeta^k)$ converges almost surely to a minimizer of~(\ref{eq:dual}).
\end{theo}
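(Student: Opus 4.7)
The plan is to reduce the statement to a direct application of Theorem~\ref{theo:main} together with Lemma~\ref{lemma:LM}, the continuity of the resolvent, and the variational interpretation of $\zer(\oT+\oU)$.

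First I would verify the hypotheses of Theorem~\ref{theo:main} for the operator $\oS = \oJ_\oR$ on $\oY = \mathsf Z = \mathsf X^{A_1}\times\cdots\times \mathsf X^{A_L}$. By Lemma~\ref{lemma:LM} (which uses Assumption~\ref{hyp:f}), $\oR$ is maximal monotone, and by the characterization recalled in Section~\ref{sec:mono}, its resolvent $\oS = \oJ_\oR = (\oI+\rho\oR)^{-1}$ is firmly non-expansive with full domain $\oY$. Moreover $\fix(\oS) = \zer(\oR)$, which is nonempty by Lemma~\ref{lemma:LM}. The random process $(\xi^k)$ satisfies Assumption~\ref{hyp:xi} by hypothesis, so Theorem~\ref{theo:main} applies: for any $\zeta^0$, the sequence $\zeta^{k+1} = \hat\oS_{\xi^{k+1}}(\zeta^k)$ converges almost surely to a (random) variable $\zeta^\infty$ supported by $\fix(\oS) = \zer(\oR)$.

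Next I would propagate this convergence through the resolvent $\oJ_{\rho\oU}$. Since $\oU = \partial g^*$ is maximal monotone (as the subdifferential of a closed proper convex function), $\oJ_{\rho\oU}$ is firmly non-expansive on $\oY$, hence in particular single-valued and continuous. Applying it to both sides of $\zeta^k \to \zeta^\infty$ almost surely yields
\[
\lambda^k = \oJ_{\rho\oU}(\zeta^k) \xrightarrow[k\to\infty]{\text{a.s.}} \oJ_{\rho\oU}(\zeta^\infty).
\]
By the second part of Lemma~\ref{lemma:LM}, for every $\zeta \in \zer(\oR)$ we have $\oJ_{\rho\oU}(\zeta)\in\zer(\oT+\oU)$. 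Since $\zeta^\infty\in\zer(\oR)$ almost surely, the limit $\oJ_{\rho\oU}(\zeta^\infty)$ lies in $\zer(\oT+\oU)$ almost surely.

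Finally I would identify $\zer(\oT+\oU)$ with the set of minimizers of the dual problem~(\ref{eq:dual}). This was already established in the paragraph preceding Lemma~\ref{lemma:LM} via \cite[Th.3.3.5]{Borwein2006}: a point $\lambda\in\mathsf Z$ minimizes $f^*(-M^*\lambda)+g^*(\lambda)$ if and only if $0 \in -M\partial f^*(-M^*\lambda) + \partial g^*(\lambda) = (\oT+\oU)(\lambda)$. Hence $\lambda^k$ converges almost surely to a minimizer of~(\ref{eq:dual}), concluding the proof. The only delicate point is making sure that the almost sure convergence of $\zeta^k$ produced by Theorem~\ref{theo:main} transfers to $\lambda^k$ despite the randomness of the limit; this is immediate here because $\oJ_{\rho\oU}$ is deterministic and continuous, so the main work is really the verification that $\oS = \oJ_\oR$ satisfies the structural hypotheses of Theorem~\ref{theo:main}, which is handled by Lemma~\ref{lemma:LM} and the standard resolvent calculus.
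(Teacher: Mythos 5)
Your proposal is correct and follows exactly the route the paper intends: the paper gives no written proof, stating only that the result ``is a consequence of Theorem~\ref{theo:main} combined with Lemma~\ref{lemma:LM}'', and your argument is precisely the careful expansion of that sketch (firm non-expansiveness and full domain of $\oS=\oJ_\oR$ from maximal monotonicity of $\oR$, nonemptiness of $\fix(\oS)=\zer(\oR)$, continuity of $\oJ_{\rho\oU}$ to transfer the almost sure limit, and the identification of $\zer(\oT+\oU)$ with the dual minimizers). No gaps; the only nitpick is notational, since $\oJ_\oR=(\oI+\oR)^{-1}$ rather than $(\oI+\rho\oR)^{-1}$, the parameter $\rho$ being already absorbed into the definition of $\oR$.
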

In order to complete the above result, we still must justify the fact that, as claimed, the above iterations can be seen as 
an asynchronous distributed algorithm. %We should also explain how this algorithm allows to find minimizers to the \emph{primal} problem.

\subsection{Distributed Algorithm}

We make the above random Gauss-Seidel iterations more explicit. In the sequel we shall always denote by $\zeta_\ell$ the $\ell$th component of a 
function $\zeta\in \mathsf Z$ \emph{i.e.}, $\zeta = (\zeta_1,\cdots,\zeta_L)$. 
For any $\ell$, we introduce the average $\bar \zeta_\ell = \sum_{v\in A_\ell} \zeta_\ell(v) / |A_\ell|$.
Lemma~\ref{lem:Sl} below states that any $\zeta\in \mathsf Z$ is uniquely represented by a couple $(\lambda,z)\in \oU$ whose expression is provided.
Moreover, it provides the explicit form of the $\ell$th block $\oS_\ell$ of the resolvent $\oS$. This shall be the basis
of our asynchronous distributed algorithm.
\begin{lemma}
For any $\zeta\in \mathsf Z$, the following holds true.\\
{\it i)} There exist a unique $(\lambda,z)\in \oU$ such that $\lambda+\rho z = \zeta$. \\
{\it ii)} $\oJ_{\rho U}(\zeta)=\lambda$.\\
{\it iii)} For any $\ell=1,\cdots, L$,  $\lambda_\ell = \zeta_\ell - \bar\zeta_\ell 1_{A_\ell}$ and $z_\ell = \frac{\bar\zeta_\ell}\rho 1_{A_\ell}.$\\
%$z_\ell\in\text{sp}(1_{A_\ell})$, that is: 
%$z_\ell = \bar z_\ell 1_{A_\ell}$ for some $\bar z_\ell\in \mathsf X$.\\
{\it iv)} For any $\ell=1,\cdots, L$, and any $v\in A_\ell$
\begin{equation}
\label{eq:Slzeta}
\oS_\ell(\zeta) : v\mapsto \lambda_\ell(v) + \rho x(v)
\end{equation}
where $x(v)$ is defined by
\begin{equation}
\label{eq:Slx}
x(v) = \text{prox}_{f_v,\rho|\sigma(v)|}\left(\frac 1{|\sigma(v)|}\sum_{\ell\in\sigma(v)}\bar z_\ell - \frac{\lambda_\ell(v)}\rho\right)\,.
\end{equation}
\label{lem:Sl}
\end{lemma}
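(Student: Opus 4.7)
I will split the proof into the analysis of $\oU = \partial g^*$ and the unfolding of the Douglas--Rachford resolvent $\oS = \oJ_\oR$. The starting observation is that since $g = \sum_\ell \iota_{\text{sp}(1_{A_\ell})}$ and the Fenchel conjugate of the indicator of a linear subspace is the indicator of its orthogonal complement, one has $g^*(\lambda) = \sum_\ell \iota_{\text{sp}(1_{A_\ell})^\perp}(\lambda_\ell)$. Consequently $(\lambda,z) \in \oU$ iff, for every $\ell$, $\lambda_\ell$ is orthogonal to $1_{A_\ell}$ (equivalently, zero-mean on $A_\ell$) and $z_\ell \in \text{sp}(1_{A_\ell})$ (equivalently, constant on $A_\ell$). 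The orthogonal decomposition $\mathsf X^{A_\ell} = \text{sp}(1_{A_\ell}) \oplus \text{sp}(1_{A_\ell})^\perp$ splits $\zeta_\ell = \bar\zeta_\ell 1_{A_\ell} + (\zeta_\ell - \bar\zeta_\ell 1_{A_\ell})$ in a unique way, so the requirement $\lambda + \rho z = \zeta$ inside $\oU$ forces, componentwise, $\rho z_\ell = \bar\zeta_\ell 1_{A_\ell}$ and $\lambda_\ell = \zeta_\ell - \bar\zeta_\ell 1_{A_\ell}$. This proves both (i) and (iii) at once, and (ii) is then immediate from $\oJ_{\rho\oU}(\zeta) = (\oI + \rho\oU)^{-1}(\zeta)$.

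For (iv), the plan is to rewrite $\eta = \oS(\zeta) = \oJ_\oR(\zeta)$ as $\zeta - \eta \in \oR(\eta)$ and to unfold the definition of $\oR$. This produces $(\mu,b) \in \oU$ and $(\nu,a) \in \oT$ with $\eta = \nu + \rho b$, $\zeta - \eta = \mu - \nu$, and $\nu + \rho a = \mu - \rho b$. Adding the first two identities yields $\zeta = \mu + \rho b$; part (i) then forces $(\mu,b) = (\lambda,z)$ from (iii). The relation $\oT = -M \cdot \partial f^* \circ (-M^*)$ next produces $w$ with $a = -Mw$ and $-M^* \nu \in \partial f(w)$. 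Substituting $\nu = \lambda - \rho z + \rho Mw$, extracted from $\nu + \rho a = \lambda - \rho z$, converts this inclusion into the single stationarity condition $0 \in \partial f(w) + M^* \lambda - \rho M^* z + \rho M^* M w$, which identifies $w$ as the minimizer of
\[
x \mapsto f(x) + \langle \lambda, Mx \rangle + \frac{\rho}{2}\| Mx - z \|^2.
\]

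To conclude, the point is that this minimization separates across vertices: $f(x) = \sum_v f_v(x(v))$, and both $\langle \lambda, Mx\rangle$ and $\|Mx - z\|^2$ expand into double sums over pairs $(v,\ell)$ with $v \in A_\ell$, equivalently $\ell \in \sigma(v)$. The $y$-dependent part at each $v$ thus reduces, after completing the square against the common weight $|\sigma(v)|$, to a proximal step of $f_v$ at the argument displayed in~(\ref{eq:Slx}), so that $w(v) = x(v)$. Substituting $\nu$ back into $\eta = \nu + \rho b$ gives $\eta = \lambda + \rho M w$, and restricting to the $\ell$th block at $v \in A_\ell$ produces $\oS_\ell(\zeta)(v) = \lambda_\ell(v) + \rho x(v)$, which is~(\ref{eq:Slzeta}).

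I expect the main obstacle to be part (iv): the operator $\oR$ is defined through four intertwined quantities $(\mu,b,\nu,a)$ subject to an affine coupling, and a head-on computation becomes unwieldy. The key move is to eliminate $(\mu,b)$ at the outset using (i) in favour of $(\lambda,z)$, then to parametrize $(\nu,a)$ through the primal variable $w$ via the structural identity $\oT = -M \partial f^*(-M^* \cdot)$. Once this change of variables is carried out, the Douglas--Rachford inclusion collapses into a convex quadratic minimization that the separable structure of $f$ and the block structure of $M$ decouple over the vertices, and the proximal formula falls out by routine algebra.
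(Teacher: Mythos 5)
Your proof is correct, and for the substantive part {\it iv)} it follows essentially the same route as the paper: unfold $\oJ_\oR$, use {\it i)} to identify $(\mu,b)$ with $(\lambda,z)$, parametrize $(\nu,a)\in\oT$ by a primal point $w$ via $\partial f^*=(\partial f)^{-1}$, reduce the coupling condition to the first-order condition of $\min_x f(x)+\langle\lambda,Mx\rangle+\frac{\rho}{2}\|Mx-z\|^2$, and decouple over vertices to obtain the prox formula; you even track the sign of $-M^*$ more carefully than the paper, whose intermediate inclusion $x\in\partial f^*(M^*\nu)$ carries a sign typo that is silently corrected in its final stationarity condition. The only real divergence is in {\it i)}--{\it iii)}: the paper establishes existence and uniqueness abstractly from the resolvent of the maximal monotone operator $\oU$ and then derives the formulas from the Moreau decomposition $\text{prox}_{g^*,\rho}=\oI-\text{prox}_{g,\rho}$ together with the fact that $\text{prox}_{g,\rho}$ is the projection onto the constant functions, whereas you compute $g^*$ and $\partial g^*$ explicitly (indicator of the orthogonal complement, normal cone of a subspace) and read all three claims off the orthogonal splitting $\zeta_\ell=\bar\zeta_\ell 1_{A_\ell}+(\zeta_\ell-\bar\zeta_\ell 1_{A_\ell})$. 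Your variant is more elementary and self-contained, and it makes the zero-mean/constant structure of the graph of $\oU$ explicit; the paper's is shorter because it outsources the computation to a standard identity from Bauschke--Combettes.
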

\begin{proof}
{\it i)-ii)} {\it Existence}: Let us define $\lambda = \oJ_{\rho \oU}(\zeta)$ and $z = (\zeta- \lambda)/\rho$. Trivially, $\lambda+\rho z = \zeta$. 
As $\zeta\in \lambda + \rho \oU(\lambda)$, we deduce that $(\lambda,z)\in\oU$.
{\it Uniqueness}: For a fixed $(\lambda, z)\in \oU$ satisfying  $\lambda+\rho z = \zeta$, one has $\zeta\in (I+\rho\oU)(\lambda)$ and thus
$\lambda = \oJ_{\rho \oU}(\zeta)$. As a consequence, $z = (\zeta- \lambda)/\rho$.

{\it iii)} We use $\lambda = \oJ_{\rho \oU}(\zeta) = \text{prox}_{g^*,\rho}(\zeta) = \zeta - \text{prox}_{g,\rho}(\zeta)$
(see~\cite[Th. 14.3]{Bauschke2011}).
As $g$ is the indicator function of the set $\text{sp}(1_{A_1})\times\cdots\times\text{sp}(1_{A_L})$, $\text{prox}_{g,\rho}$ coincides with the projection
operator onto that set. Thus, for any $\ell$, $\lambda_\ell = \zeta_\ell - \bar\zeta_\ell 1_{A_\ell}$.
The expression of $z$ follows from  $z = (\zeta- \lambda)/\rho$. 

{\it iv)} Operator $\oS=\oJ_\oR$ can be written as
$$
\left\{ (\mu+\rho b,\nu + \rho b) : (\mu,b)\in \oU , (\nu,a)\in\oT , \nu + \rho a = \mu - \rho b \right\}.
$$
Moreover, as $\oR$ is monotone, $\oS(\zeta)$ is a singleton.
Representing $\zeta=\lambda+\rho z$ with $(\lambda,z)\in\oU$, it follows from the above expression of $\oS$ that
$\oS(\zeta) = \nu+\rho z$ where $\nu$ is such that $\nu + \rho a = \lambda - \rho z$ for some $a\in\oT(\nu)$.
Using $\oT = -M\cdot\partial f^*\circ (-M^*)$, condition $a\in\oT(\nu)$ translates to: there exists $x\in \partial f^*(M^*\nu)$
s.t. $a=-Mx$. The output-resolvent is obtained by $\nu+\rho z = \lambda + \rho Mx$. 
For a given component $\ell$, this boils down to equation~(\ref{eq:Slzeta}).
The remaining task is to provide the expression of $x$.
By the Fenchel-Young equality $\partial f^* = \partial f^{-1}$~\cite[Prop.3.3.4]{Borwein2006}, 
condition $x\in \partial f^*(M^*\nu)$ is equivalent to $M^*\nu\in\partial f(x)$. 
Using that $\nu = \lambda - \rho (z -Mx)$, we obtain
$0\in \partial f(x) -M^*\lambda +  \rho M^*(z -Mx)$. Otherwise stated,
$x = \arg\min_{y\in\mathsf X^V}{\mathcal L}_\rho(y,z;\lambda)$ where $\mathcal L_\rho$ is the augmented Lagrangian
defined in~(\ref{eq:alag}). 
Using the results of Section~\ref{sec:sync-admm},  $x(v)$ is given by (\ref{eq:Slx}) for any~$v$.
\end{proof}
We are now in position to state the main algorithm. It simply consists
in an explicit writing of the random Gauss-Seidel iterations
$\zeta^{k+1} = \hat \oS_{\xi^{k+1}}(\zeta^k)$ using Lemma~\ref{lem:Sl}{\it iv)}. 
Note that, by Lemma~\ref{lem:Sl}{\it i)}, the definition of a sequence $(\zeta_k)_k$ on $\mathsf Z$
is equivalent to the definition of two sequences $(\lambda^k,z^k)\in \oU$ such that $\zeta^k = \lambda^k + \rho z^k$.
Moreover, by Lemma~\ref{lem:Sl}{\it iii)}, each component $z^k_\ell$ of  $z^k$ is a constant. The definition of $z^k$ thus reduces to the definition
of $L$ constants $\bar z^k_1,\cdots,\bar z^k_L$ in $\mathsf X$.

\noindent {\bf Asynchronous ADMM}:
\noindent \hrulefill
\nopagebreak\\
\noindent At each iteration $k$, draw r.v. $\xi^{k+1}$.\\
For $\ell=\xi^{k+1}$, set for any $v\in A_\ell$:
\begin{eqnarray*}
  x^{k+1}(v) &=& \text{prox}_{f_v,\rho|\sigma(v)|}\left(\frac 1{|\sigma(v)|}\sum_{\ell\in\sigma(v)}\bar z^k_\ell - \frac{\lambda^k_\ell(v)}\rho\right)\\
\bar z_\ell^{k+1} &=& \frac 1{|A_\ell|}\sum_{w\in A_\ell} x^{k+1}(w) \\
\lambda^{k+1}_\ell(v) &=& \lambda^{k}(v) +\rho \left(x^{k+1}(v) - \bar z_\ell^{k+1}\right)\,.
\end{eqnarray*}
For any $\ell\neq \xi^{k+1}$, set $\lambda^{k+1}_\ell = \lambda^{k}_\ell$.\\
For any $w\notin A_{\xi^{k+1}}$, set $x^{k+1}(w)=x^{k}(w)$.

\noindent \hrulefill
\smallskip

\section{Implementation Example}
\label{sec:num}

In order to illustrate our results, we consider herein an asynchronous 
version of the ADMM algorithm in the context of 
Section~\ref{sec:admm}-Example~\ref{ex:pairwise}. 
The scenario is the following: first, Agent $v \in \{ 1, \ldots, | V| \}$ 
wakes up at time $k+1$ with the probability $q_v$. Denoting by 
${\mathcal N}_v$ the neighborhood of Agent $v$ in the Graph $G$, this agent
then chooses one of its neighbors, say $w$, with the probability 
$1/|{\mathcal N}_v|$ and sends an activation message to $w$. In this setting, 
the edge $\{v, w\}$ coincides with one of the 
$A_\ell$ of Example~\ref{ex:pairwise} in Section~\ref{sec:admm}. It is easy
to see that the samples of the activation process $\xi^k$ who is of course
valued in $E$ are governed by the probability law 
\[
\PP[ \xi^1 = \{ v, w \} ] = \frac{q_v}{|{\mathcal N}_v|} 
+ \frac{q_w}{|{\mathcal N}_w|} > 0 . 
\] 
When the edge $\{v,w\}$ is activated, the following two $\text{prox}(\cdot)$
operations are performed by the agents: 
\begin{eqnarray*}
 x^{k+1}(v) &=& \text{prox}_{f_v,\rho|{\mathcal N}(v)|}
\Bigl(\frac 1{|{\mathcal N}(v)|} \sum_{\ell\in{\mathcal N}(v)}
\bar z^k_\ell - \frac{\lambda^k_\ell(v)}\rho\Bigr)\\
 x^{k+1}(w) &=& \text{prox}_{f_w,\rho|{\mathcal N}(w)|}
\Bigl(\frac 1{|{\mathcal N}(w)|} \sum_{\ell\in{\mathcal N}(w)}
\bar z^k_\ell - \frac{\lambda^k_\ell(w)}\rho\Bigr). 
\end{eqnarray*}
The two agents exchange then the values $x^{k+1}(v)$ and $x^{k+1}(w)$ and 
perform the following operations: 
\begin{eqnarray*}
\bar z_\ell^{k+1} &=& \frac{x^{k+1}(v) + x^{k+1}(w)}{2} \\  
\lambda^{k+1}_\ell(v) &=& \lambda^{k}(v) 
                   +\rho \frac{x^{k+1}(v) - x^{k+1}(w)}{2}  \\
\lambda^{k+1}_\ell(w) &=& \lambda^{k}(w) 
                   +\rho \frac{x^{k+1}(w) - x^{k+1}(v)}{2}  . 
% \ (= -\lambda^{k+1}(v)) . 
\end{eqnarray*}
% Finally, the two agents broadcast the value of $\bar z_\ell^{k+1}$ to their
% other neighbors. \\ 
We remark that this communication scheme is reminiscent of the so-called 
\textit{Random Gossip} algorithm introduced in \cite{Boyd2006} in the context
of distributed averaging. 

\section{Numerical Results}
\label{sec:simus} 

We consider a network with $V = \{1,\ldots, 5\}$ and with 
$E = \{ \{1,2\}, \{ 2,3\}, \{ 3,4\}, \{4,5\}, \{ 5,3 \}\}$.
We evaluate the behavior of: i) the \textit{Synchronous ADMM} ii) the
\textit{Asynchronous ADMM} and iii) the \textit{Distributed Gradient Descent}
with $1/\sqrt{k}$ stepsize \cite{Tsitsiklis1986} using \textit{Random Gossip}
as a communication algorithm\cite{Boyd2006}. Each agent maintains a different
quadratic convex function and their goal is to reach consensus over the
minimizer of problem (\ref{eq:pb}).

In Figure~\ref{fig:cdc}, we plot the squared error versus the number of primal 
updates for the three considered algorithms. We observe that our algorithm     
clearly outperforms the Distributed Gradient Descent. 

\begin{figure}[!ht]
\centering
\includegraphics[width=\columnwidth]{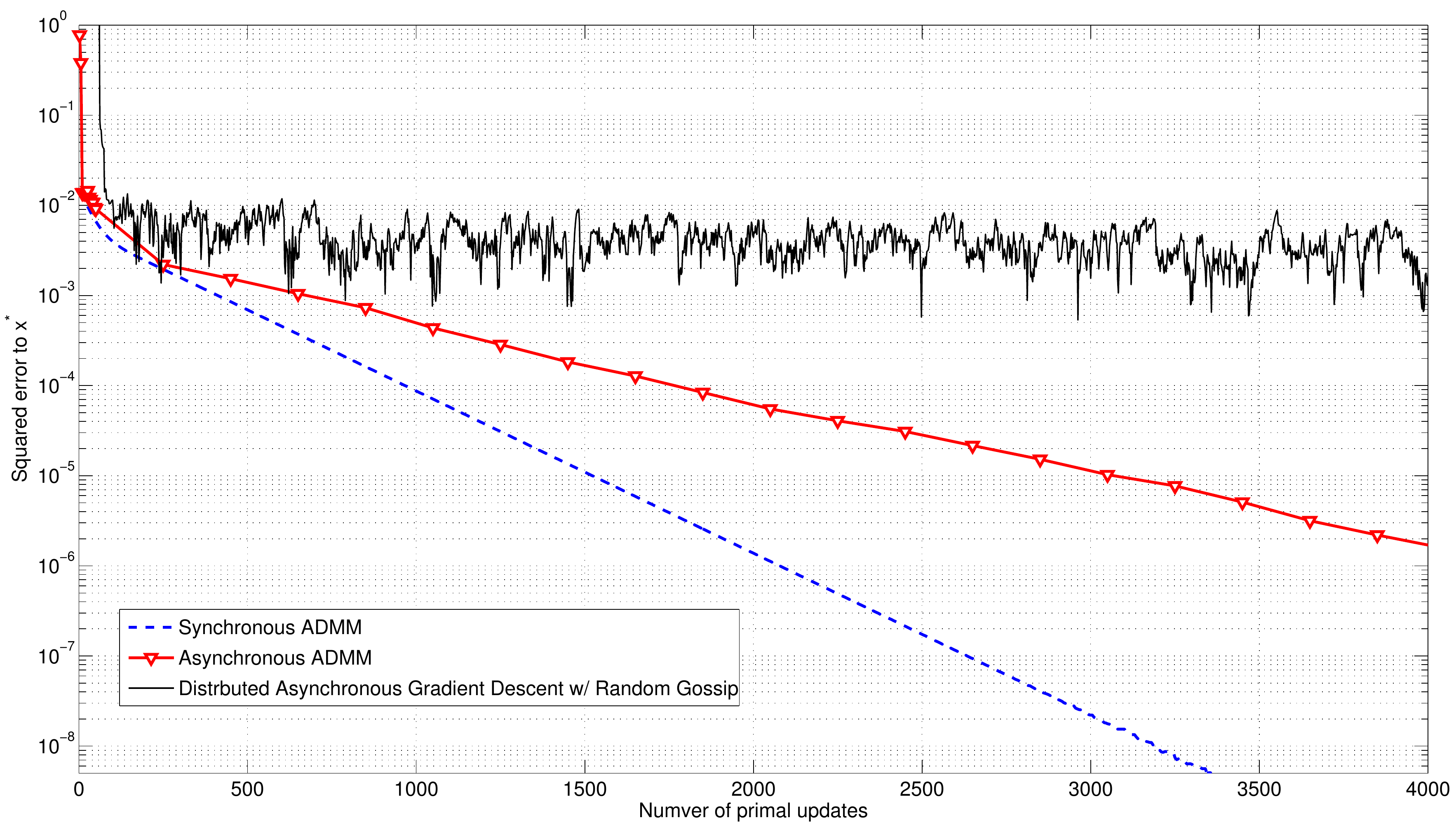}
\caption{Squared error of distributed optimization algorithms versus the number of primal updates}
\label{fig:cdc}
\end{figure}

%%%%%%%%%%%%%%%%%%%%%%%%%%%%%%%%%%%%%%%%%%%%%%%%%%%%%%%%%%%%%%%%%%%%%%%%%%%%%%%%
% \section{Conclusions and Future Work} 
% \label{sec:cl}

% In this paper, we designed a new fully distributed asynchronous algorithm for
% optimization in ad-hoc networks using monotone operators theory. We proved its
% almost sure convergence in a general case. Future work includes the analysis of
% the convergence speed of the derived algorithms and the performance in 
% presence of inaccurate $\text{prox}$ computations.  

\tiny

%%%%%%%%%%%%%%%%%%%%%%%%%%%%%%%%%%%%%%%%%%%%%%%%%%%%%%%%%%%%%%%%%%%%%%%%%%%%%%%%

\end{document}